\newcommand{\Prob}[1]{\mathrm{Pr}\left[#1\right]}
\newcommand{\eps}{\varepsilon}
\DeclarePairedDelimiter\floor{\lfloor}{\rfloor}
\DeclarePairedDelimiter\ceil{\lceil}{\rceil}
\DeclareMathOperator{\dist}{dist}
\title{On the Giant Component of Geometric Inhomogeneous Random Graphs} 
\titlerunning{Giant Component of Geometric Inhomogeneous Random Graphs} 
\author{Thomas Bl{\"a}sius}{Karlsruhe Institute of Technology}{thomas.blaesius@kit.edu}{}{}
\author{Tobias Friedrich}{Hasso Plattner Institute, University of Potsdam, Germany}{Tobias.Friedrich@hpi.de}{{https://orcid.org/0000-0003-0076-6308}}{}
\author{Maximilian Katzmann}{Karlsruhe Institute of Technology}{maximilian.katzmann@kit.edu}{}{}
\author{Janosch Ruff}{Hasso Plattner Institute, University of Potsdam, Germany}{Janosch.Ruff@hpi.de}{}{}
\author{Ziena Zeif}{Hasso Plattner Institute, University of Potsdam, Germany}{Ziena.Zeif@hpi.de}{https://orcid.org/0000-0003-0378-1458}{}
\authorrunning{T.~Bl{\"a}sius, T.~Friedrich, M.~Katzmann, J.~Ruff, Z.~Zeif} 
\keywords{geometric inhomogeneous random graphs, connectivity, giant
  component} 
\begin{document}

\maketitle
	
\begin{abstract}
  In this paper we study the threshold model of \emph{geometric inhomogeneous random graphs}
  (GIRGs); a generative random graph model that is closely related to
  \emph{hyperbolic random graphs} (HRGs).  These models have been
  observed to capture complex real-world networks well with respect to
  the structural and algorithmic properties.  Following comprehensive
  studies regarding their \emph{connectivity}, i.e., which parts of
  the graphs are connected, we have a good understanding under which
  circumstances a \emph{giant} component (containing a constant
  fraction of the graph) emerges.

  While previous results are rather technical and challenging to work
  with, the goal of this paper is to provide more accessible proofs.
  At the same time we significantly improve the previously known
  probabilistic guarantees, showing that GIRGs contain a giant
  component with probability $1 - \exp(-\Omega(n^{(3-\tau)/2}))$ for
  graph size $n$ and a degree distribution with power-law exponent
  $\tau \in (2, 3)$.  Based on that we additionally derive insights
  about the connectivity of certain induced subgraphs of GIRGs.
\end{abstract}
	
\newpage
	
\section{Introduction}

Geometric inhomogeneous random graphs (GIRGs) are a generative graph
model where vertices are weighted and placed in a geometric ground
space and the probability for two of them to be adjacent depends on
the product of their weights, as well as their
distance~\cite{k-girggcg-18}.  In a sense the model combines the
strengths of \emph{inhomogeneous random graphs}~\cite{s-g-02} and
\emph{random geometric graphs}~\cite{p-rgg-03}.  Introduced as a
simplified and more general version of \emph{hyperbolic random graphs}
(HRGs)~\cite{kpk-h-10}, GIRGs share crucial properties with complex
real-world networks.  Such networks are typically characterized by a
\emph{heterogeneous degree distribution} (with few high-degree
vertices, while the majority of vertices has small degree), \emph{high
  clustering} (vertices with common neighbors are likely adjacent
themselves), and a \emph{small diameter} (longest shortest path), and
it has been shown that GIRGs and HRGs capture these properties
well~\cite{gpp-rhg-12, k-girggcg-18, ms-k-19}.

Beyond these structural properties, GIRGs have also been observed to
be a good model for real-wold networks when it comes to the
performance of graph algorithms~\cite{bf-evacaga-22}.  This makes the
GIRG framework relevant for algorithmic purposes in multiple ways.  On
the one hand, they are a useful tool in the context of average-case
analysis, where they yield more realistic instances than, e.g., the
Erdős--Rényi model, while it is still sufficiently simple to be
mathematically accessible~\cite{bf-evacaga-22, bkl-g-22}.  On the
other hand, we can use GIRGs to generate an abundance of benchmark
instances with varying properties, allowing us to perform thorough
evaluations of algorithms even when real-world data is
scarce~\cite{bfw-ecmfsfn-21, bfk-eggihrg-22}.

One of the most basic graph properties, which is also relevant from an
algorithmic point of view, is \emph{connectivity}, i.e., the question
about what parts of a graph are connected via paths.  For random
graphs, the first question that typically arises in the context of
connectivity revolves around the emergence of a so-called \emph{giant
  component}, which is a connected component whose size is linear in
the size of the graph.  The existence of a giant has been researched
on many related graph models like \emph{Erdős--Rényi random
  graphs}~\cite{er-erg-60, er-rgi-59}, \emph{random geometric
  graphs}~\cite{ar-cgup-02, dmp-cdrgg-08, p-rgg-03, gk-c-98}, as well
as on \emph{Chung-Lu random graphs} that also capture inhomogeneous
random graphs~\cite{acl-rgmplg-01, cl-adrgged-02, cl-ccrgg-02}.

Unsurprisingly, being such a fundamental feature, connectivity has
also been studied on GIRGs, and since HRGs are so closely related to
them, we consider the corresponding results to be relevant here as
well.  For HRGs we know how the emergence of a giant depends on
certain model properties that control the degrees of the resulting
graph~\cite{bfm-giant-15, fm-giant-18}.  We note that some analyses
there are based on a coupling from HRGs to a continuum percolation
model that exhibits a strong resemblance to GIRGs
(see~\cite[Section~2]{fm-giant-18} and~\cite[Part~I,
Section~3.5]{k-girggcg-18}).  Beyond the giant we also have bounds on
the size of the second largest component of HRGs~\cite{km-slcrhg-19}.
For GIRGs it is known that a giant exists \emph{with high
  probability}, i.e., with probability $1 - O(1/n)$, where $n$ denotes
the number of vertices in the graph.

In this paper, we show that the threshold GIRGs have a giant component
with probability at least $1 - \exp(-\Omega(n^{(3 - \tau)/2}))$.  This
improves the previous results in two ways.  First, our proof is
simpler and shorter than the technical existing proofs for
HRGs~\cite{bfm-giant-15, fm-giant-18}.  Secondly, our probability
bound is substantially stronger compared to previous bounds.
Moreover, we note that our improved bound does not only hold for the
full graph but also translates to subgraphs located in restricted
regions of the ground space.  The argument for this is inspired by a
technique used for HRGs~\cite[Section 4]{fm-giant-18} (though it is
much simpler in our case).

Besides providing more accessible insights in the connectivity of
GIRGs, we believe that our results, in particular those on subgraphs
in restricted regions, can be helpful for algorithmic applications.
For example in problems like \emph{balanced connected
  partitioning}~\cite{cc-aambcgp-21}, one is interested in
partitioning a graph into connected components of (roughly) equal size
and in \emph{component order connectivity}~\cite{ghi-scocm-13} the
goal is to find a small separator that divides the graph into
components of bounded size.  There it is important, that the graph
cannot only be separated into smaller pieces but that these pieces
remain actually connected.

In the following, we give a brief overview of the basic concepts used
in the paper (Section~\ref{sec:preliminaries}) before presenting our
proofs regarding the emergence of a giant in GIRGs
(Section~\ref{sec:existence}).

\section{Preliminaries}
\label{sec:preliminaries}

\subparagraph{Geometric Inhomogeneous Random Graphs.}

Let $\mathbb B^d = [0, 1]^d$ be the $d$-dimensional hypercube
($\mathbb B$ for ``box'') and let $\dist$ be the $L_\infty$ metric,
i.e., for $x = (x_1, \dots, x_d) \in \mathbb B^d$ and
$y = (y_1, \dots, y_d) \in \mathbb B^d$ we have
$\dist(x, y) = \max_{i \in [d]} |x_i - y_j|$.

A \emph{geometric inhomogeneous random graph (GIRG)} $G = (V, E)$ with
\emph{ground space} $\mathbb B^d$ is obtained in three steps.  The
first step consists of a homogeneous Poisson point process
on~$\mathbb{B}^d$, with an intensity that yields $n$ points in
expectation.  Each point is then considered to be a vertex in the
graph.  In the second step, each vertex $v$ is assigned a
\emph{weight} $w_v > 1$ that is sampled according to a Pareto
distribution with exponent $\tau \in (2, 3)$, i.e.,
$\Prob{w_v \le w} = 1 - w^{-(\tau - 1)}$.  In the third step, any two
vertices $u$ and $v$ are connected by an edge with a probability that
depends on their distance and their weights.  More precisely, there
are two variants.  In a \emph{threshold GIRG}, $u$ and $v$ are
adjacent if and only if
\begin{align*}
  \dist(u, v) \le \left( \frac{\lambda w_u w_v}{n} \right)^{1/d},
\end{align*}
where $\lambda > 0$ controls the expected average degree of the graph.
In the \emph{temperate} variant we have an additional temperature
parameter $T \in (0, 1)$ and the probability for $u$ and $v$ to be adjacent
is given by
\begin{align*}
  \Prob{\{u, v\} \in E} = \min \left\{ 1, \left(\frac{\lambda w_u w_v}{n \cdot (\dist(u, v))^d} \right)^{1/T} \right\}.
\end{align*}
The threshold variant is the limit of the temperature variant for
$T \to 0$.  We denote the resulting probability distribution of graphs
with $\mathcal{G}(n, \mathbb{B}^d, \tau, \lambda, T)$ for general
GIRGs (allowing temperatures in $T \in [0, 1)$).  When we just refer
to the threshold case, we use
$\mathcal{G}(n, \mathbb{B}^d, \tau, \lambda)$.  We assume the
parameters $d, \tau, \lambda$, and $T$ to be constant, i.e.,
independent of $n$.

\subparagraph{GIRG Variants.}

In the literature, several variants of the GIRG model have been
studied and we want to briefly discuss the choice we made here.
Usually, GIRGs are considered with a torus $\mathbb T^d$ as ground
space, i.e., the distance in the $i$th dimension, between $x$ and $y$
is $\min\{|x_i - y_i|, 1 - |x_i - y_i|\}$ instead of just
$|x_i - y_i|$.  The torus usually makes arguments easier as it
eliminates the special case close to the boundary of $\mathbb B^d$.
However, in our case, this is not relevant.  Moreover, as distances in
$\mathbb T^d$ are only smaller than in $\mathbb B^d$, all our results
concerning the largest connected component directly translate to the
case where $\mathbb T^d$ is the ground space.

Moreover, instead of sampling $n$ points uniformly at random in the
ground space, we use a Poisson point process.  This is a technique
often used in geometric random graphs as it makes the number of
vertices appearing in disjoint regions stochastically independent.
This is a similar difference as the one between the Erdős--Rényi model
$G(n, m)$ with a fixed number of edges $m$ and the Gilbert model
$G(n, p)$ with a fixed probability $p$ for each individual edge to
exist.  We generally advocate for using the Poisson variant of the
GIRG model.

\subparagraph{Poisson Point Process.}

Let $R \subseteq \mathbb B^d$ be a region of the ground space with
volume $a$.  Then, the size of the vertex set $V(R)$, i.e., the number
of vertices that are sampled in $R$ is a random variable following a
Poisson distribution with expectation $\mu = an$.  This in particular
means that the probability for $R$ to contain no vertex is
$\exp(-\mu)$.

We note that the Poisson point process we consider is a \emph{marked}
process, where each point sampled from $\mathbb B^d$ obtains a weight
sampled from a weight space $\mathcal W$ as a mark.  Due to the
marking theorem, this is equivalent to considering an (inhomogeneous)
Poisson point process of the product space
$\mathbb B^d \times \mathcal W$, i.e., colloquially speaking, each
point pops up with a position and a weight instead of initially only
having a position and drawing the weight as an afterthought.  This is
also equivalent to just sampling the number of points $N$ following a
Poisson distribution and viewing the positions and the weights as
marks that are sampled subsequently for each of the $N$ points.
Throughout the paper, we switch between these different perspectives
without making this explicit.

\subparagraph{Lowest Weights Dominate.}

We regularly consider weight ranges $[w_1, w_2]$ with
$w_2 \ge c \cdot w_1$ for a constant $c > 1$.  The probability for a
$v$ to have weight in $[w_1, w_2]$ is dominated by $w_1$:
\begin{equation*}
  \Prob{w_v \in [w_1, w_2]}
  = w_1^{- (\tau - 1)} - w_2^{- (\tau - 1)} 
  \ge w_1^{-(\tau - 1)} \cdot \left(1 - c^{-(\tau - 1)} \right)
  \in \Theta(\Prob{w_v \ge w_1}).
\end{equation*}
	
\section{Existence of a Giant Component}
\label{sec:existence}
	
We want to show that a threshold GIRG is highly likely to contain a
connected component of linear size.  Our argument goes roughly as
follows.  We first note that vertices with weight at least
$\sqrt{n / \lambda}$ form a clique, which we call the \emph{core} of
the graph.  For each non-core vertex, we can show that the probability
that it has a path into the core is non-vanishing, i.e., it is lower
bounded by a non-zero constant.  This already shows that we get a
connected graph of linear size in expectation.
	
To show concentration, i.e., that we get a large connected component
with high probability, we essentially need to show that the events for
different low-weight vertices to connect to the core are sufficiently
independent of each other.  To this end, we subdivide the ground space
into a grid of regular \emph{cells} of side length $\Delta$.  We call
a cell \emph{nice} if a linear number of its vertices connect to the
core via paths not leaving the cell and then show that a cell is nice
with non-vanishing probability.  As this only considers paths within
the cell, the different cells are independent.  Thus, we get a series
of independent coin flips, one for each cell.  If a constant fraction
of these coin flips succeeds, we have a connected component of linear
size.  Hence, if the number of cells is sufficiently large, we get
concentration via a Chernoff bound.  It follows that we essentially
want to choose the cell width $\Delta$ to be as small as possible such
that cells are still nice with non-vanishing probability.

In Section~\ref{sec:layer-paths}, we first show that every vertex has
constant probability to have a path to the core.  In fact, we show
something slightly stronger, by considering not just any paths but
so-called layer paths.  Afterwards, we use this result in
Section~\ref{sec:cell-niceness}, to bound the probability for a cell
to be nice.  This then also informs us on how to choose the cell width
$\Delta$ and thus on how many cells we obtain.  With this, we can wrap
up the argument in Section~\ref{sec:giant-lower} by applying a
Chernoff bound.  Besides our main results, we there also mention
immediate implications.
	
\subsection{Layer Paths} \label{sec:layer-paths} We want to show that,
for any individual vertex, the probability that it has a path to a
vertex of the core is non-vanishing.  For this, we define the
$\ell$-th \emph{layer} $V_\ell$ to be the set of vertices with weight
in $[e^{\ell / 2}, e^{(\ell + 1) / 2})$.  Note that the upper and
lower bounds are a constant factor apart and thus (as mentioned in
Section~\ref{sec:preliminaries}) the probability for a vertex to have
layer $\ell$ is asymptotically dominated by the lower bound, i.e.,
$\Prob{v \in V_\ell} \in \Theta(\Prob{w_v \ge e^{\ell / 2}}) =
\Theta(e^{- \ell (\tau - 1) / 2})$.

A path $(v_0, \dots, v_k)$ is a \emph{layer path} if it goes from one
layer to the next in every step, i.e., $v_i \in V_\ell$ implies
$v_{i - 1} \in V_{\ell - 1}$ for every $i \in [k]$.  Note that
vertices in layer $\ceil*{\log(n / \lambda)}$ have weight at least
$\sqrt{n / \lambda}$ and thus belong to the core.  Thus, the following
lemma shows that each vertex has a layer path to the core with
non-vanishing probability.

\begin{lemma}
  \label{lem:layer-path}
  Let $G \sim \mathcal{G}(n, \mathbb B^d, \tau, \lambda)$ be a
  threshold GIRG and let $v$ be a non-core vertex.  The probability
  that there is a layer path from $v$ to layer
  $\ceil*{\log (n / \lambda)}$ is non-vanishing.
\end{lemma}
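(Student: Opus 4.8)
The plan is to build the layer path greedily, one layer at a time, and show that at each step the conditional probability of successfully extending the path is bounded below by a constant that does not depend on the current layer index. Start with the non-core vertex $v$; say it lies in layer $V_{\ell_0}$. We want to reach layer $L := \ceil*{\log(n/\lambda)}$. At a generic step we are at some vertex $u \in V_\ell$ with $\ell < L$ and we want to find a neighbour of $u$ in $V_{\ell+1}$. The weight of $u$ is at least $e^{\ell/2}$ and a candidate $u' \in V_{\ell+1}$ has weight at least $e^{(\ell+1)/2}$, so the threshold edge condition $\dist(u,u') \le (\lambda w_u w_{u'}/n)^{1/d}$ is satisfied for every $u'$ lying in a ball of radius $r_\ell := (\lambda e^{\ell/2} e^{(\ell+1)/2}/n)^{1/d} = (\lambda e^{(2\ell+1)/2}/n)^{1/d}$ around $u$ (I will cut this ball to the part inside $\mathbb B^d$, which loses at most a factor $2^d$ in volume). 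The volume of that region is $\Theta(r_\ell^d) = \Theta(\lambda e^{\ell + 1/2}/n)$, so by the Poisson point process the expected number of vertices of $\mathbb B^d$ landing there with a weight in $[e^{(\ell+1)/2}, e^{(\ell+2)/2})$ is, using that the lowest weight dominates and the weights are independent marks,
\begin{align*}
  \mu_\ell \;=\; \Theta\!\left(\frac{\lambda e^{\ell+1/2}}{n}\right)\cdot n \cdot \Theta\!\left(e^{-(\ell+1)(\tau-1)/2}\right)
  \;=\; \Theta\!\left(\lambda\, e^{\ell(1 - (\tau-1)/2)}\right)
  \;=\; \Theta\!\left(\lambda\, e^{\ell(3-\tau)/2}\right).
\end{align*}
Since $\tau < 3$, the exponent $(3-\tau)/2$ is positive, so $\mu_\ell$ is bounded below by a positive constant for every $\ell \ge 0$ (in fact it grows). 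Hence the probability that the next layer is non-empty inside the ball, and thus that the path can be extended by one step, is at least $1 - e^{-\mu_\ell} \ge 1 - e^{-c}$ for a constant $c = c(d,\tau,\lambda) > 0$.

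The next point is to combine these per-step successes. The subtlety — and the main obstacle — is independence: when I reveal the vertices near $u$ to extend into layer $\ell+1$, I condition on the configuration so far, and I must make sure the balls I look in at different steps are disjoint (or at least that I can expose fresh randomness each time). I would handle this by always searching in the part of the ball around the current endpoint that has not yet been examined; since the radii $r_\ell$ are geometrically increasing and I can, if needed, restrict each search to a thin fresh annulus/region of volume still $\Theta(r_\ell^d)$, the Poisson independence on disjoint regions gives genuine independence of the per-step coin flips, at the cost of absorbing another constant factor into $\mu_\ell$. Alternatively, since only the lower bound on $\mu_\ell$ matters and $\mu_\ell$ is already $\Omega(1)$, one can afford to be wasteful. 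Then the probability that all of the at most $L - \ell_0 \le L$ steps succeed is at least $\prod_{\ell} (1 - e^{-\mu_\ell})$. This product is bounded below by a positive constant \emph{uniformly in $n$}: because $\mu_\ell = \Omega(e^{\ell(3-\tau)/2})$, we have $e^{-\mu_\ell}$ decaying doubly-exponentially in $\ell$, so $\sum_{\ell \ge 0} e^{-\mu_\ell} < \infty$ with a bound independent of $n$, and therefore $\prod_{\ell \ge 0}(1 - e^{-\mu_\ell}) \ge \exp(-O(\sum_\ell e^{-\mu_\ell})) = \Omega(1)$.

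Putting it together: the probability that $v$ has a layer path to layer $L = \ceil*{\log(n/\lambda)}$ is at least this constant, which is independent of $n$, i.e.\ non-vanishing, proving the lemma. A couple of routine details I would spell out in the full write-up: the boundary correction for balls near $\partial \mathbb B^d$ (a constant-factor loss only, since $r_\ell \le 1$ for the relevant range, and vertices of weight $\ge \sqrt{n/\lambda}$ are reached by layer $L$ so one never needs $r_\ell$ to exceed the diameter of the box); and the fact that, by the marking theorem, restricting to vertices whose weight mark falls in a prescribed layer interval is again a Poisson process whose intensity is the original intensity times the probability the mark lands in that interval, which is where the $\Theta(e^{-(\ell+1)(\tau-1)/2})$ factor legitimately enters. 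The genuinely load-bearing inequality is simply $\tau < 3 \iff (3-\tau)/2 > 0$, which is exactly what makes each $\mu_\ell$ bounded away from zero and hence the whole telescoping product non-vanishing.
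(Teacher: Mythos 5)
Your proposal is correct and follows essentially the same route as the paper: per-layer expected number of reachable next-layer vertices $\Theta(e^{\ell(3-\tau)/2})$, Poisson emptiness probability, and then bounding the product over layers via the convergent sum $\sum_\ell \exp(-\Theta(e^{\ell(3-\tau)/2}))$. One small simplification available to you: the per-step independence comes for free because the events $A_\ell$ involve points in pairwise disjoint weight layers of the marked Poisson process (and the layer-$(\ell+1)$ process is spatially homogeneous, so the conditional success probability does not depend on where the current endpoint landed), so the fresh-annulus construction is unnecessary.
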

\begin{proof}
  We bound the probability that such a layer path exists in three
  steps.  First, we bound the probability that a vertex $u$ on layer
  $\ell$ has a neighbor in layer $\ell + 1$.  In the second step, we
  consider the intersection of the events where this happens on all
  considered layers.  Finally, we show that the resulting probability
  is non-vanishing.
	
  For the first step, consider two vertices $u \in V_\ell$ and
  $v \in V_{\ell + 1}$ in consecutive layers, as shown in
  Figure~\ref{fig:layer-path}.  Both their weights are at least
  $w = e^{\ell / 2}$.  Thus, they are definitely adjacent if their
  distance $\dist(u, v)$ satisfies
  \begin{equation*}
    \dist(u, v) \le
    \left( \frac{\lambda w^2}{n} \right)^{1/d} =
    {\lambda^{1 / d}} \left(\frac{e^{\ell}}{n}\right)^{1 / d} 
    \eqqcolon \Delta_\ell.
  \end{equation*}
  If vertex $u \in V_\ell$ is the current vertex from which we want to
  make the next step in a layer path, we are thus interested in the
  probability that there is a vertex $v$ that lies in layer $\ell + 1$
  with $\dist(u, v) \le \Delta_\ell$.  Since these two events (being
  in layer $\ell + 1$ and having sufficiently low distance) are
  independent, the probability that both happen is
  $\Prob{v \in V_{\ell + 1}} \cdot \Prob{\dist(u, v) \le
    \Delta_\ell}$.  As mentioned above, we have
  $\Prob{v \in V_{\ell + 1}} \in \Theta(e^{- \ell (\tau - 1) / 2})$.
  Moreover,
  $\Prob{\dist(u, v) \le \Delta_\ell} \in \Theta(\Delta_\ell^d) =
  \Theta(e^\ell / n)$.  Hence, we obtain
  \begin{align*}
    \Prob{v \in V_{\ell + 1}} \cdot \Prob{\dist(u, v) \le \Delta_\ell} \in
    \Theta\left(e^{- \ell (\tau - 1) / 2} \cdot e^\ell / n\right) =
    \Theta\left(e^{\ell (3 - \tau) / 2} / n\right).
  \end{align*}
  To conclude the first step of the proof, let $X_\ell$ be the number
  of vertices in layer $\ell + 1$ with distance at most $\Delta_\ell$
  to $u \in V_\ell$.  By the above probability, we have
  $\mathbb{E}[{X_\ell}] = \Theta(e^{\ell (3 - \tau) / 2})$.  We
  consider the event $X_\ell > 0$ and call it $A_\ell$.  Note that
  $A_\ell$ implies that $u$ has at least one neighbor in the next
  layer.  As $X_\ell$ follows a Poisson distribution, we get
  \begin{figure}[t]
    \centering \includegraphics{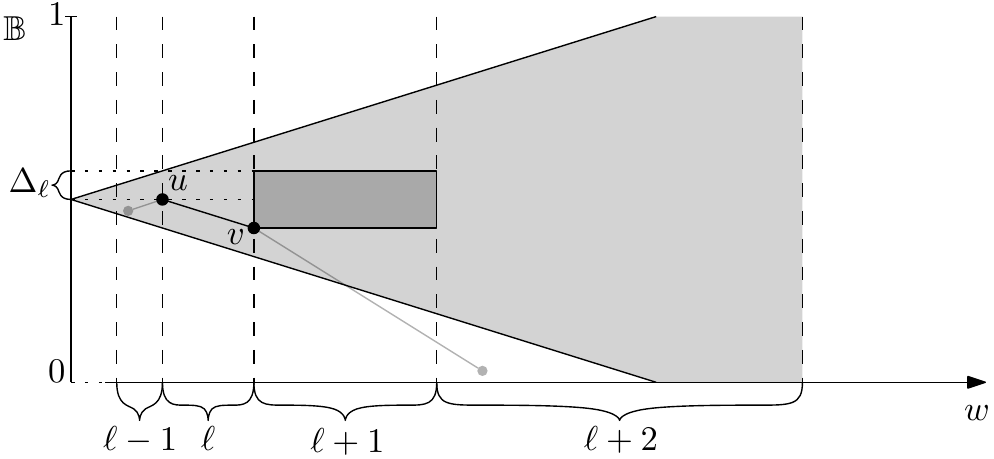}
    \caption{
      Excerpt of a one-dimensional GIRG with the weights on the
      $x$-axis and the ground space $\mathbb{B}$ on the $y$-axis.  A
      layer path spans from layer $\ell - 1$ to $\ell + 2$.  The gray region
      is the neighborhood of vertex $u$.  The dark-gray region
      contains all vertices in layer $\ell + 1$ that have distance at
      most $\Delta_{\ell}$ to $u$.}
    \label{fig:layer-path}
  \end{figure}
  \begin{equation*}
    \Prob{A_\ell}
    = 1 - \Prob{X_\ell = 0}
    = 1 - \exp\left( - \mathbb{E}[X_\ell] \right)
    = 1 - \exp\left( - \Theta\left(e^{\ell (3 - \tau) / 2}\right) \right).
  \end{equation*}
	
  In the second step of the proof, we now consider the intersection of
  all the independent events
  $A_{0}, A_{1}, \dots, A_{\ceil*{\log(n / \lambda)}}$, which is
  sufficient for a layer path starting in layer $0$ to exist.  Note
  that a lower bound for the probability of this intersection also
  gives a lower bound for the existence of a layer path starting in
  any other layer.  To show that this intersection occurs with
  non-vanishing probability, we utilize the fact that $\Prob{A_\ell}$
  approaches~$1$ very quickly for increasing $\ell$.  More precisely,
  we show that for a constant $c$, all subsequent events $A_\ell$ with
  $\ell \ge c$ are sufficiently likely, that we can simply take the
  union bound over their complements.  Thus, we obtain
  \begin{align*}
    \Prob{\bigcap_{\ell = 0}^{\ceil*{\log(n / \lambda)}} A_\ell} =
    \Prob{\bigcap_{\ell = 0}^{c - 1} A_\ell} \cdot \Prob{\bigcap_{\ell = c}^{\ceil*{\log(n / \lambda)}} A_\ell}.
  \end{align*}
  Clearly, the first factor is non-vanishing as it is the product of
  constantly many non-zero constants.  For the second factor, we
  consider the complementary events and apply the union bound to
  obtain
  \begin{align*}
    \Prob{\bigcap_{\ell = c}^{\ceil*{\log(n / \lambda)}} A_\ell}
    &= 1 - \Prob{\bigcup_{\ell = c}^{\ceil*{\log(n / \lambda)}} {A_\ell^C}} \\
    &\ge 1 - \sum_{\ell = c}^{\ceil*{\log(n / \lambda)}} (1 - \Prob{A_\ell}) \\
    &= 1 - \sum_{\ell = c}^{\ceil*{\log(n / \lambda)}} \exp\left(-\Theta(e^{\ell(3 - \tau) / 2})\right).
  \end{align*}
  Since the sum converges, we can choose $c$ to be a sufficiently
  large constant such that the sum is bounded by any constant
  $\eps > 0$.  The above expression is thus at least $1 - \eps$, which
  is non-vanishing.
\end{proof}

Observe that Lemma~\ref{lem:layer-path} already shows that the
expected number of vertices with a layer path to the core is linear.
Thus, the expected size of the connected component including the core
vertices is linear.  To show concentration, we separate the ground
space into cells that are then considered independently.

\subsection{A Coin Flip for Each Cell}
\label{sec:cell-niceness}

We subdivide the ground space into a grid of regular cells of side
length $\Delta$.  We first show that the high-weight vertices of each
cell are likely to induce a connected graph.  This is useful as we can
afterwards focus on vertices of lower weight.  As edges between
low-weight vertices are short, layer paths on these vertices can cover
only a small distance and thus only few of them leave their cell,
which makes different cells (mostly) independent.

\begin{lemma}
  \label{lem:connected-core}
  Let $G \sim \mathcal{G}(n, \mathbb B^d, \tau, \lambda)$ be threshold
  GIRG, and let $C$ be a cell of side length~$\Delta$ and let $w$ be a
  weight.  Then, the graph induced by vertices in $C$ of weight at
  least $w$ is connected with probability at least
  \begin{align*}
    1 - \frac{(2\Delta)^d}{\lambda w^2} \cdot \exp\left({-\frac{\lambda w^{3 - \tau}}{2^d}}\right) \cdot n.
  \end{align*}
\end{lemma}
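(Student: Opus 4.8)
The plan is to show that the high-weight vertices of a cell form a connected subgraph by a two-step argument: first, cover the cell by a grid of small sub-cells of side length $\Delta_w = (\lambda w^2 / n)^{1/d}$, which is exactly the threshold distance at which two vertices of weight at least $w$ are guaranteed to be adjacent; second, show that with the claimed probability every such sub-cell contains at least one vertex of weight at least $w$. If that happens, then every vertex of weight $\ge w$ in $C$ lies in some non-empty sub-cell, and any two vertices in the same or in adjacent sub-cells are within distance (at most) $2\Delta_w$ in the $L_\infty$ metric — so we will actually want sub-cells of side length $\Delta_w / 2$ (giving the factor $2^d$ in the exponent and the $(2\Delta)^d$ in front), so that two vertices in neighbouring sub-cells are at distance at most $\Delta_w$ and hence adjacent. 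Since the sub-cell adjacency graph (grid graph) is connected, and each sub-cell's ``representative'' high-weight vertex is adjacent to the representatives of all neighbouring non-empty sub-cells, the induced graph on all weight-$\ge w$ vertices in $C$ is connected.

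The calculation then goes as follows. The number of sub-cells of side length $\Delta_w / 2$ needed to cover a cell of side length $\Delta$ is at most $\lceil 2\Delta / \Delta_w \rceil^d \le (2\Delta / \Delta_w)^d \cdot (\text{const})$, and plugging in $\Delta_w^d = \lambda w^2 / n$ gives roughly $(2\Delta)^d n / (\lambda w^2)$ sub-cells — this is the leading factor in the bound. For a fixed sub-cell $S$ of volume $(\Delta_w/2)^d = \lambda w^2 / (2^d n)$, the number of vertices of weight at least $w$ landing in $S$ is Poisson with mean $\mu_S = \Prob{w_v \ge w} \cdot \mathrm{vol}(S) \cdot n = w^{-(\tau-1)} \cdot \lambda w^2 / 2^d = \lambda w^{3 - \tau} / 2^d$ (here I use the marked Poisson process / marking-theorem perspective from the preliminaries, so that the count of weight-$\ge w$ points in a region is itself Poisson). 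Hence $\Prob{S \text{ is empty of weight-}\ge w\text{ vertices}} = \exp(-\lambda w^{3-\tau}/2^d)$. A union bound over all $\le (2\Delta)^d n / (\lambda w^2)$ sub-cells gives that the probability that some sub-cell is empty — which is the only way connectivity can fail under this argument — is at most $\frac{(2\Delta)^d}{\lambda w^2} \cdot \exp(-\lambda w^{3-\tau}/2^d) \cdot n$, and the complement of this event implies the desired connectivity.

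The one genuine subtlety — the main obstacle — is handling the cases where $\Delta_w \ge \Delta$, i.e.\ where the threshold distance already exceeds the cell diameter (then all weight-$\ge w$ vertices in $C$ are mutually adjacent and form a clique, trivially connected), and more generally making the ceiling/rounding in ``number of sub-cells'' not blow up the constant; one wants the stated clean bound with no hidden constant. I would address this by noting the bound is vacuous (the right-hand side is $\le 0$ or the claim is trivial) unless $\Delta_w < \Delta$, in which case $2\Delta/\Delta_w > 2$ and the ceiling contributes only a bounded multiplicative slack that can be absorbed; alternatively, one slightly shrinks the sub-cell side length by a constant so that an integer number of them exactly tiles $C$, at the cost of only changing the absolute constant in the exponent, which the ``$2^d$'' already generously accommodates. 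Everything else is routine: the Poisson tail bound $\Prob{\mathrm{Poi}(\mu) = 0} = e^{-\mu}$, the union bound, and substitution of $\Delta_w^d = \lambda w^2/n$.
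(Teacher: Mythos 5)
Your proposal matches the paper's proof essentially line for line: the same discretization of $C$ into sub-cells of side length $\tfrac{1}{2}(\lambda w^2/n)^{1/d}$ so that vertices in adjacent sub-cells are guaranteed adjacent, the same Poisson computation giving emptiness probability $\exp(-\lambda w^{3-\tau}/2^d)$ per sub-cell, a union bound (the paper phrases it via Bernoulli's inequality) over the sub-cells, and the same separate treatment of the trivial case $\Delta_w \ge \Delta$. The one place you hedge --- the ceiling/rounding slack --- is resolved in the paper by counting only the $\lfloor \Delta/\Delta_w\rfloor^d$ sub-cells \emph{fully contained} in $C$ (so the floor yields the exact stated constant) and observing that emptiness of the partial boundary sub-cells cannot affect connectivity; this is marginally cleaner than your ceiling-plus-absorption fix but is the same idea.
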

\begin{proof}
  \begin{figure}[t]
    \centering \includegraphics{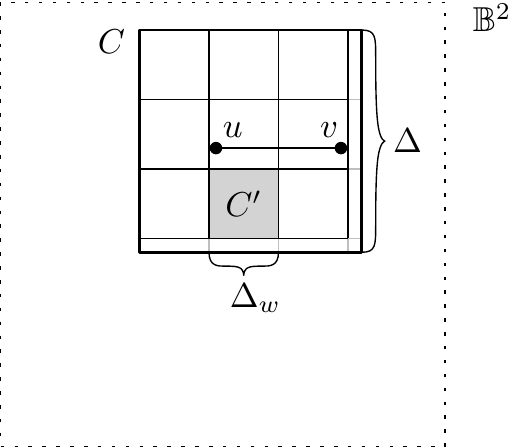}
    \caption{The cell $C$ of width $\Delta$ is divided into sub-cells
      of width $\Delta_w$.  The sub-cell $C'$ is completely contained
      in $C$.  The vertices $u$ and $v$ are in adjacent sub-cells and
      are therefore adjacent themselves.}
    \label{fig:sub-cells}
  \end{figure}
  We discretize the cell $C$ into sub-cells, such that vertices in
  adjacent sub-cells are adjacent themselves, as shown in
  Figure~\ref{fig:sub-cells}.  Note that two vertices $u, v$ with
  weights $w_u, w_v \ge w$ are adjacent if their distance is bounded
  by
  \begin{equation*}
    \dist(u, v) \le
    \left(\frac{\lambda w^2}{n}\right)^{1/d}.
  \end{equation*}
  Thus, all vertices in adjacent sub-cells are guaranteed to be
  adjacent, if the side length of a sub-cell is
  \begin{equation*}
    \Delta_w = \frac{1}{2}\left(\frac{\lambda w^2}{n}\right)^{1/d}.
  \end{equation*}
  Note that for very large $w$, we get $\Delta_w \ge \Delta$, in which case
  all vertices in $C$ are pairwise adjacent with probability 1.  In
  the following, we therefore assume that $w$ is smaller.  For a given
  sub-cell $C'$, we compute the probability for a given vertex $v$ to
  lie in $C'$ as
  \begin{align*}
    \Pr[v \in V(C')] &= (\Delta_w)^d = \left(\frac{1}{2}\left(\frac{\lambda w^2}{n}\right)^{1/d}\right)^{d} = \frac{\lambda w^2}{2^d n}.
  \end{align*}
  Additionally, the probability for $v$ to have weight at least
  $w_v \ge w$, is given by
  \begin{align*}
    \Pr[w_v \ge w] = 1 - \Pr[w_v \le w] = w^{-(\tau - 1)}.
  \end{align*}
  Together, we obtain
  \begin{align*}
    \Pr[v \in V(C') \land w_v \ge w] &= \Pr[v \in V(C')] \cdot \Pr[w_v \ge w] = \frac{\lambda w^2}{2^d n} \cdot w^{-(\tau - 1)} = \frac{\lambda w^{3 - \tau}}{2^d n}.
  \end{align*}
  Consequently, the expected number of vertices of weight at least $w$
  in $C'$ is
  \begin{align*}
    \mathbb{E}\left[\left| \left\{ v \in V(C') \mid w_v \ge w \right\} \right| \right] = \frac{\lambda w^{3 - \tau}}{2^d}.
  \end{align*}
  Since the vertices are distributed according to a Poisson
  distribution, the probability for $C'$ to not contain any of these
  vertices is given by
  \begin{align*}
    \Pr[\left\{ v \in V(C') \mid w_v \ge w \right\} = \emptyset] = \exp\left({-\frac{\lambda w^{3 - \tau}}{2^d}}\right).
  \end{align*}
  Finally, we lower-bound the probability for the vertices of weight
  at least $w$ in our initial cell~$C$ to induce a connected graph, by
  considering the probability that none of its sub-cells is empty.
  Note that we have
  \begin{align*}
    k = \left(\floor*{\frac{\Delta}{\Delta_w}}\right)^d
  \end{align*}
  sub-cells $C_1', \dots, C_k'$ that are \emph{completely} contained
  in the cell $C$.  Clearly, whether the remaining sub-cells
  (intersecting the boundary of $C$) are empty or not has no impact on
  the connectedness of the considered subgraph.  The probability for
  all of the sub-cells $C_1', \dots, C_k'$ to be non-empty can be
  bounded using Bernoulli's inequality, which yields
  \begin{align*}
    \Pr[\forall C' \in \{C_1', \dots, C_k'\} \colon V(C') \neq
    \emptyset]
    &= (1 - \Pr[V(C') = \emptyset])^k \\
    &\ge 1 - k \cdot \Pr[V(C') = \emptyset] \\
    &= 1 - \left(\floor*{\frac{\Delta}{\Delta_w}}\right)^d \cdot \exp\left({-\frac{\lambda w^{3 - \tau}}{2^d}}\right) \\
    &\ge 1 - \Delta^d \cdot \frac{2^d n}{\lambda w^2} \cdot \exp\left({-\frac{\lambda w^{3 - \tau}}{2^d}}\right) \\
    &= 1 - \frac{(2\Delta)^d}{\lambda w^2} \cdot
      \exp\left({-\frac{\lambda w^{3 - \tau}}{2^d}}\right) \cdot n. \qedhere
  \end{align*}
\end{proof}

The following lemma shows that we basically get an independent
coin-flip with non-vanishing success probability for each cell to be
nice.  We want to point out three technical details of the lemma
statement here.  First, the lemma specifically considers the connected
component containing a vertex of weight at least $\hat w$.  We will
later choose $\hat w = \sqrt{n / \lambda}$, i.e., this vertex is part
of the core.  As all core vertices form a clique, this makes sure that
the components we get for the individual cells actually connect to one
large component in the whole graph.  Secondly, the lower bound on
$\mu$ requires that the cells are sufficiently large to contain a
vertex of weight $\hat w$ with non-vanishing probability.  Thirdly,
the lower bound on $\hat w$ ensures that vertices with higher weight
are likely connected by Lemma~\ref{lem:connected-core}.

\begin{lemma}
  \label{lem:coinflip-for-each-cell}
  Let $G \sim \mathcal{G}(n, \mathbb B^d, \tau, \lambda)$ be a
  threshold GIRG, let $\hat w$ be a weight and let $C$ be a cell of
  side length $\Delta$.  Let $\mu = \Delta^d n$ be the expected number
  of vertices in $C$.  If $\mu \ge \hat w^{\tau - 1}$,
  $\mu \in \omega((\log n)^{2 / (3 - \tau) } \log\log(n)^d)$, and
  $\hat w \in \omega((\log n)^{1 / (3 - \tau)})$, then, with
  non-vanishing probability, the graph induced by the vertices in $C$
  contains a vertex of weight at least $\hat w$ whose connected
  component has size $\Theta(\mu)$.
\end{lemma}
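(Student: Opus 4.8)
The plan is to reduce the claim to an independent Chernoff argument over sub-cells, combining Lemma~\ref{lem:connected-core} for the high-weight vertices with a boundary-confined version of Lemma~\ref{lem:layer-path} for the low-weight ones. First I would fix an intermediate weight $w^{\ast} = \Theta((\log n)^{1/(3-\tau)})$, with the constant large enough that Lemma~\ref{lem:connected-core}, applied to $C$ with weight $w^{\ast}$, makes the subgraph induced by the vertices of $C$ of weight at least $w^{\ast}$ connected with probability $1-o(1)$ (here one uses $\mu \le n$); call this subgraph the \emph{local core}. Since $\hat w \in \omega((\log n)^{1/(3-\tau)})$ we have $\hat w \ge w^{\ast}$ for large $n$, so every vertex of weight at least $\hat w$ lies in the local core; and since $\mu \ge \hat w^{\tau - 1}$, the cell contains such a vertex with probability at least $1 - e^{-1}$. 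Hence, conditioned on these two events (failing with probability $o(1)$ and at most $e^{-1}$, respectively), it suffices to exhibit $\Theta(\mu)$ vertices of $C$ joined to the local core by paths inside $C$: together with the connected local core they form a single component containing the weight-$\hat w$ vertex, of size at least $\Theta(\mu)$ and at most $|V(C)| = O(\mu)$.

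For a single vertex I would re-run the proof of Lemma~\ref{lem:layer-path}, now targeting layer $\ell^{\ast} := \ceil*{2 \log w^{\ast}}$ (whose vertices already have weight at least $w^{\ast}$) instead of the global core, and insisting that every search ball stay inside $C$. This changes neither the estimate $\mathbb{E}[X_\ell] = \Theta(e^{\ell(3-\tau)/2})$ nor the independence of the layer events, so one still gets a constant lower bound $p' > 0$ on the probability of such a confined layer path --- provided the starting point lies at distance at least $D := \sum_{\ell = 0}^{\ell^{\ast} - 1} \Delta_\ell$ from the boundary of $C$, since then every vertex of the path is within distance $D$ of the start and every step has length at most $D$. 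The sum defining $D$ is geometric and dominated by its last term, giving $D = \Theta(((w^{\ast})^2 / n)^{1/d})$, hence $D^d n = \Theta((\log n)^{2/(3-\tau)})$.

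To turn ``constant probability per vertex'' into ``$\Theta(\mu)$ such vertices with good probability'', I would partition $C$ into sub-cells of side length $s = \Theta(D \log\log n)$. Then $s^d n = \Theta((\log n)^{2/(3-\tau)} (\log\log n)^d)$, so the hypothesis $\mu \in \omega((\log n)^{2/(3-\tau)} \log\log(n)^d)$ makes the number $M$ of sub-cells contained in $C$ grow without bound, while each sub-cell still has $m := s^d n \to \infty$ expected vertices and, because $s \gg D$, a ``safe'' inner region of volume at least $s^d / 2$. Let $N_j$ count the vertices in the inner region of the $j$-th sub-cell that reach a weight-$w^{\ast}$ vertex via a layer path confined to that sub-cell; linearity over the Poisson point process and the per-vertex bound give $\mathbb{E}[N_j] = \Omega(m)$, and since $N_j$ is at most the Poisson-concentrated vertex count of the sub-cell, a reverse-Markov estimate gives $\Pr[N_j = \Omega(m)] = \Omega(1)$. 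The events $\{N_j = \Omega(m)\}$ depend on disjoint parts of the marked point process, hence are independent, so a Chernoff bound places a constant fraction of the $M$ sub-cells in the good state with probability $1 - o(1)$, contributing $\sum_j N_j = \Omega(Mm) = \Omega(\mu)$ vertices, each of which lies (once the local core is connected) in the local core's single component. A union bound over the at most four bad events --- local core disconnected, no weight-$\hat w$ vertex, too few good sub-cells, $|V(C)|$ larger than $O(\mu)$ --- leaves probability at least $1 - e^{-1} - o(1) > 0$, which is non-vanishing.

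The main obstacle is choosing the three scales consistently: $w^{\ast}$ must be large enough to beat a union bound over $\le n$ sub-cells in Lemma~\ref{lem:connected-core}, yet small enough that the $D$-margins are a negligible fraction of each sub-cell, and $s$ must be large enough both for those margins to vanish and for the per-sub-cell reverse-Markov bound to work, yet small enough that $M \to \infty$; it is precisely this three-way tension that pins down the two hypotheses on $\hat w$ and $\mu$. The only other point needing care is checking that the estimate of Lemma~\ref{lem:layer-path} is unaffected by confining every search ball to a (sub-)cell --- with the $L_\infty$ metric and the chosen margin this is routine.
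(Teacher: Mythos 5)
Your proposal is correct, and its skeleton matches the paper's for the first three ingredients: the bound $\mu \ge \hat w^{\tau-1}$ gives a weight-$\hat w$ vertex with probability $1 - e^{-1}$; Lemma~\ref{lem:connected-core} applied with a threshold $w^\ast = \Theta((\log n)^{1/(3-\tau)})$ (the paper's $\overline w$) connects the local high-weight vertices a.a.s.; and a margin argument confines the layer paths of Lemma~\ref{lem:layer-path}, where your geometric-sum bound $D = \Theta(((w^\ast)^2/n)^{1/d})$ is in fact slightly sharper than the paper's ``path length times maximum step'' bound $O(((w^\ast)^2/n)^{1/d}\log\log n)$. Where you genuinely diverge is the final concentration step. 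The paper shrinks $C$ once to a region $C'$, counts only \emph{first-layer} vertices of $C'$ (so that the per-vertex layer-path event is independent of the first-layer count), and applies a single Markov inequality to the fraction of failures; this is lighter and needs no second-moment computation, but only delivers ``non-vanishing probability,'' which is all the lemma claims. You instead tile $C$ with $M \to \infty$ sub-cells of side $\Theta(D\log\log n)$, prove $\Pr[N_j = \Omega(m)] = \Omega(1)$ per sub-cell via reverse Markov, and run a Chernoff bound across sub-cells --- essentially importing the decomposition of Theorem~\ref{thm:giant-likely-to-exist} one level down. This costs you one small step you gloss over (the reverse-Markov estimate needs the domination $N_j \le V_j$ together with a bound on $\mathrm{E}[V_j \mathds{1}(V_j > Km)]$, i.e.\ a short second-moment argument for the Poisson count, to conclude $\Pr[N_j = \Omega(m)] = \Omega(1)$ from $\mathrm{E}[N_j] = \Omega(m)$), but it buys a stronger conclusion: the only constant-probability failure mode left is the absence of a weight-$\hat w$ vertex, so your success probability is $1 - e^{-1} - o(1)$ rather than an unspecified constant. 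Both choices of scale are consistent with the hypotheses on $\mu$ and $\hat w$ exactly as you describe.
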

\begin{proof}
  The overall argument goes as follows.  First, the lower bound on
  $\mu$ ensures that $C$ contains a vertex of weight $\hat w$ with
  non-vanishing probability.  For a smaller weight
  $\overline w \le \hat w$, we then apply
  Lemma~\ref{lem:connected-core} to get that all vertices of weight at
  least $\overline w$ form a connected component asymptotically almost
  surely.  Afterwards, it remains to show that enough vertices of
  lower weight connect to a vertex of weight at least $\overline w$
  via paths not leaving $C$.  For the existence of these paths, we use
  Lemma~\ref{lem:layer-path}.  To show that most of them do not leave
  $C$, we use that the considered vertices have weight at most
  $\overline w$ and thus cannot deviate too much from the starting
  position.
	
  Recall that the weight of a vertex is at least $\hat w$ with
  probability $\hat w^{-(\tau - 1)}$.  Thus, the expected number of
  vertices in cell $C$ with weight at least $\hat w$ is
  $\mu \hat w^{-(\tau - 1)}$.  Plugging in the bound
  $\mu \ge \hat w^{\tau - 1}$, everything cancels and we obtain an
  expected value of $1$.  As the number of vertices in $C$ with weight
  above $\hat w$ follows a Poisson distribution, we get at least one
  such vertex with non-vanishing probability.
	
  We set $\overline w = (2^d / \lambda \log n)^{1 / (3 - \tau)}$.
  Note that by the condition on $\hat w$ in the lemma statement, we
  have $\overline w \le \hat w$.  Note further that $\overline w$ is
  chosen such that the exponent in the bound of
  Lemma~\ref{lem:connected-core} simplifies to $- \log n$.  Thus by
  Lemma~\ref{lem:connected-core}, the graph induced by the vertices of
  weight at least $\overline w$ in $C$ is connected with probability
  at least $1 - (2 \Delta)^d / (\lambda \overline w^2)$.  As
  $\Delta \le 1$ and $\hat w$ is increasing with $n$, this goes to $1$
  for $n \to \infty$.
	
  Consider a vertex of weight below $\overline w$.  Then, by
  Lemma~\ref{lem:layer-path}, it has a layer path to a vertex with
  weight at least $\overline w$ with non-vanishing probability.  In
  the following, with \emph{layer path} we always refer to a layer
  path that ends in the layer belonging to $\overline w$.  Note that a
  layer path has length at most $O(\log\log n)$.  Also note that the
  largest weight we encounter is in $O(\overline w)$ as the path stops
  in the layer belonging to weight $\overline w$ and the weights
  increase only by a constant factor between layers.  It follows that,
  in each dimension, the distance between two consecutive vertices on
  a layer path is in $O((\overline w^2 / n)^{1/d})$, as the vertices
  would not be connected otherwise.  Thus, the overall deviation of a
  layer path from the starting point is upper bonded by
  $O((\overline w^2 / n)^{1/d} \log\log n) = O\left(((\log n)^{2 / (3
      - \tau)} \log\log(n)^d / n)^{1/d} \right)$.  By the second lower
  bound on $\mu$, this is asymptotically less than $\Delta$.  Thus,
  shrinking $C$ accordingly from all directions yields a subregion
  $C'$ that contains $\Theta(\mu)$ vertices in expectation such that
  any layer path that starts in $C'$ stays in $C$.
	
  Instead of counting all vertices in $C'$ that have layer paths, we
  only count vertices in the first layer.  This has the advantage,
  that the event that an individual vertex in the first layer has a
  layer path is independent of the number of vertices in the first
  layer (while it depends on the number of vertices in higher layers).
  First note that the number of vertices in the first level of $C'$ is
  a random variable following a Poisson distribution with expected
  value in $\Theta(\mu)$.  Thus, there are $\Theta(\mu)$ such vertices
  with non-vanishing probability.
	
  Now let $X \in [0, 1]$ be the random variable that describes the
  fraction of vertices in the first layer that fail to have a layer
  path.  By Lemma~\ref{lem:layer-path}, the probability for an
  individual vertex to not have a layer path is a upper bounded
  constant $p < 1$ (i.e., the layer path exists with non-vanishing
  probability at least $1 - p$).  Thus, we get $\mathbb{E}[X] \le p$.
  Markov's inequality then gives us $\Prob{X \ge c} \le p / c$ and
  thus $\Prob{X < c} \ge 1 - p / c$.  We can choose $c$ to be a
  constant with $p < c < 1$, which gives us a non-vanishing
  probability that a fraction of at least $1 - c > 0$ vertices have
  the desired layer path.  Note that this holds independently of the
  number of vertices actually sampled in the first layer of $C'$.
	
  To wrap up, consider the three events that there exists a vertex of
  weight at least $\hat w$, that there are $\Theta(\mu)$ vertices in
  the first layer of $C'$, and that a constant fraction of them have
  layer paths.  Note that the three events are independent and each
  holds with non-vanishing probability.  Thus, their intersection,
  which we denote with $A$, also holds with non-vanishing probability.
  Finally, the event $B$ that all vertices of weight at least
  $\overline w$ induce a connected graph holds asymptotically almost
  surely.  Though $A$ and $B$ are not independent, we can apply the
  union bound to their complements to obtain that $A$ and $B$ together
  hold with non-vanishing probability.
\end{proof}

\subsection{Large Components are Likely to Exist}
\label{sec:giant-lower}

To obtain the following theorem, it remains to apply a Chernoff bound
to the coin flips obtained for each cell by
Lemma~\ref{lem:coinflip-for-each-cell}.

\begin{theorem}
  \label{thm:giant-likely-to-exist}
  Let $G \sim \mathcal G(n, \mathbb B^d, \tau, \lambda)$ be a
  threshold GIRG.  Then $G$ has a connected component of size
  $\Theta(n)$ with probability
  $1 - \exp\left( - \Omega(n^{(3 - \tau) / 2}) \right)$.
\end{theorem}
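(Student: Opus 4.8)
The plan is to turn Lemma~\ref{lem:coinflip-for-each-cell} into a collection of many independent coin flips, one per cell, and then apply a Chernoff bound to argue that a constant fraction of the cells succeed. First I would fix the parameters: set $\hat w = \sqrt{n / \lambda}$ so that the distinguished high-weight vertex of a successful cell lies in the core, and choose the cell side length $\Delta$ as small as possible subject to the three hypotheses of Lemma~\ref{lem:coinflip-for-each-cell}. With $\mu = \Delta^d n$, the binding constraint is $\mu \ge \hat w^{\tau-1} = (n/\lambda)^{(\tau-1)/2}$ (the other two lower bounds on $\mu$ and $\hat w$ are only polylogarithmic and hence weaker). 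So I would take $\mu = \Theta(n^{(\tau-1)/2})$, i.e. $\Delta^d = \Theta(n^{(\tau-1)/2 - 1}) = \Theta(n^{-(3-\tau)/2})$, which gives a number of cells $m = \Delta^{-d} = \Theta(n^{(3-\tau)/2})$. This exponent is exactly what appears in the theorem, so the cell count is the source of the claimed probability bound.

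Next I would make the independence precise. Each cell $C_i$ gives, via Lemma~\ref{lem:coinflip-for-each-cell}, an event that $C_i$ contains a core vertex ($w \ge \hat w$) whose connected component within $C_i$ has size $\Theta(\mu)$. Since all paths and vertices witnessing this event lie inside $C_i$, and distinct cells occupy disjoint regions of the ground space, the Poisson point process makes these events mutually independent across cells. Call the event for cell $i$ a "success"; by the lemma it has probability at least some constant $q > 0$ (for $n$ large enough). Let $S$ be the number of successful cells among the $m$ cells. Then $\E[S] \ge qm$, and by a Chernoff bound $\Prob{S < qm/2} \le \exp(-\Omega(m)) = \exp(-\Omega(n^{(3-\tau)/2}))$. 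On the complementary event $S \ge qm/2$, we have $\Omega(m)$ cells each contributing a component of size $\Theta(\mu)$, and each such component contains a core vertex; since the core is a clique (all weights $\ge \sqrt{n/\lambda}$), all these per-cell components merge into a single connected component. Its size is at least $\Omega(m) \cdot \Theta(\mu) = \Omega(n^{(3-\tau)/2}) \cdot \Theta(n^{(\tau-1)/2}) = \Omega(n)$, and it is trivially $O(n)$, so the giant has size $\Theta(n)$, as required.

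The main obstacle is not the Chernoff step itself but the bookkeeping needed to invoke Lemma~\ref{lem:coinflip-for-each-cell} uniformly. In particular, the lemma's success probability is only "non-vanishing", i.e.\ bounded below by a constant for $n$ sufficiently large; I would need to check that this constant $q$ can be taken independent of which cell $C_i$ we look at (it can, since all cells are congruent and the model is translation-invariant up to boundary effects, which only help) and that the polylogarithmic side conditions on $\mu$ and $\hat w$ are comfortably satisfied by the chosen $\Delta$ — here $\mu = \Theta(n^{(\tau-1)/2})$ dominates any polylog$(n)$ factor, so this is immediate. One also has to make sure the "$\Theta(\mu)$" per-cell component sizes, summed over $\Omega(m)$ cells, really sum to $\Omega(n)$ rather than overcounting: distinct cells are disjoint, so their component-vertices are disjoint, and the merge via the core only adds vertices. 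I would also note, as the paper hints, that the same argument applied to a sub-grid of cells covering a restricted region $R \subseteq \mathbb B^d$ of volume $\Theta(1)$ yields a giant inside $R$ with the same probability bound, which gives the advertised consequences for induced subgraphs.
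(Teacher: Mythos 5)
Your proposal is correct and follows essentially the same route as the paper: fix $\hat w = \sqrt{n/\lambda}$, choose $\Delta$ so that $\mu = \hat w^{\tau-1}$, obtain $\Theta(n^{(3-\tau)/2})$ independent per-cell coin flips via Lemma~\ref{lem:coinflip-for-each-cell}, apply a Chernoff bound, and merge the successful cells' components through the core clique. The only cosmetic difference is that the upper bound $O(n)$ on the component size is not quite ``trivial'' since the total vertex count is Poisson; the paper dispatches this by noting that having $\omega(n)$ vertices is exponentially unlikely.
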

\begin{proof}
  First note that the probability to have $\omega(n)$ vertices is
  exponentially small and thus we only have to show the lower bound on
  the size of the largest connected component.  To apply
  Lemma~\ref{lem:coinflip-for-each-cell}, we choose the cell width
  $\Delta$ such that $\Delta^d n = \mu = \hat w^{\tau - 1}$.  With
  this, we obtain that the number of cells $k$ is
  \begin{equation*}
    k \in \Theta\left(\frac{1}{\Delta^d}\right) =
    \Theta\left(\frac{n}{\hat w^{\tau - 1}}\right) =
    \Theta\left(\frac{n}{\left(\sqrt{n/\lambda}\right)^{\tau - 1}}\right) =
    \Theta\left(n^{\frac{3 - \tau}{2}}\right).
  \end{equation*}
  As the chosen $\Delta$ satisfies the conditions of
  Lemma~\ref{lem:coinflip-for-each-cell}, the graph induced by each
  cell contains a vertex from the core whose connected component has
  size $\Theta(\mu)$ with non-vanishing probability.  If this holds
  for a constant fraction of cells, we get a giant component, as all
  vertices of weight at least $\hat w$ form a clique in $G$.  Thus, we
  have $k$ independent coin flips, each succeeding with a probability
  of $p > 0$, and we are interested in the number of successes $X$.
  To show that $X \in \Theta(k)$ is highly likely, we can simply apply
  a Chernoff bound (see~\cite[Theorem 4.4]{mu-pc-05}).  For
  $\delta \ge 0$, we get
  \begin{equation*}
    \Prob{X \le (1 - \delta) \mathbb{E}[X]} \le \exp\left( -
      \frac{\delta^2}{2} \mathbb{E}[X] \right).
  \end{equation*}
  As $\mathbb{E}[X] \in \Theta(k)$, this implies
  $\Prob{X \in o(k)} \le \exp\left( - \Omega(k) \right)$.  Inserting
  $k$ yields the claim.
\end{proof}

As already mentioned in Section~\ref{sec:preliminaries}, this directly
implies the following corollary.

\begin{corollary}
  Theorem~\ref{thm:giant-likely-to-exist} also holds with the torus
  $\mathbb T$ as ground space.
\end{corollary}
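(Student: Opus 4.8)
The plan is to reduce the torus case to the box case already established in Theorem~\ref{thm:giant-likely-to-exist} via a monotone coupling, which is the cleanest route. I would sample a single marked Poisson point process on the product space (the underlying sets $\mathbb B^d$ and $\mathbb T^d$ coincide, and the weight marks are drawn identically), and on the resulting common vertex set build two graphs: $G_{\mathbb B}$ using the $L_\infty$ distance on the box and $G_{\mathbb T}$ using the $L_\infty$ distance on the torus. As noted in Section~\ref{sec:preliminaries}, coordinatewise $\min\{|x_i - y_i|, 1 - |x_i - y_i|\} \le |x_i - y_i|$, hence $\dist_{\mathbb T}(u,v) \le \dist_{\mathbb B}(u,v)$ for every pair $u,v$. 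The threshold connection rule only compares this distance against $(\lambda w_u w_v / n)^{1/d}$, a quantity that depends solely on the (identical) weights; therefore every edge present in $G_{\mathbb B}$ is also present in $G_{\mathbb T}$, i.e.\ $G_{\mathbb B}$ is a spanning subgraph of $G_{\mathbb T}$.

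From this containment, every connected component of $G_{\mathbb B}$ lies inside a connected component of $G_{\mathbb T}$, so the size of the largest component of $G_{\mathbb T}$ is at least that of $G_{\mathbb B}$. Applying Theorem~\ref{thm:giant-likely-to-exist} to $G_{\mathbb B}$ then yields a component of size $\Omega(n)$ in $G_{\mathbb T}$ with probability $1 - \exp(-\Omega(n^{(3-\tau)/2}))$. For the matching $O(n)$ upper bound, I would observe that the total number of vertices is the same $\mathrm{Poisson}(n)$ variable in both models, so it exceeds $\omega(n)$ only with exponentially small probability; hence the largest component of $G_{\mathbb T}$ has size $O(n)$ with the claimed probability as well, and the $\Theta(n)$ statement follows for the torus ground space.

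I do not expect a genuine obstacle here: the only points that need care are that the coupling must reuse the \emph{same} weight marks (automatic once both processes are realized on the common marked point set) and that the threshold rule is monotone in the distance (immediate from its form). If a self-contained argument were preferred instead, one could simply replay the proofs of Lemmas~\ref{lem:layer-path}--\ref{lem:coinflip-for-each-cell} verbatim on $\mathbb T^d$, since each of them uses only upper bounds on distances to guarantee adjacency and lower bounds on ball volumes, both of which remain valid on the torus, and the regular cell decomposition is unaffected; but the coupling argument is shorter and is the one I would write up.
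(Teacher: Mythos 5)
Your coupling argument is exactly the paper's (implicit) justification: the corollary is derived from the observation in Section~\ref{sec:preliminaries} that torus distances are at most box distances, so the threshold GIRG on $\mathbb T^d$ contains the one on $\mathbb B^d$ as a spanning subgraph and the largest component can only grow. The proposal is correct and matches the paper's approach.
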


The following lemma states a well known property of GIRGs.  As we are
not aware of a formal proof in the literature, we provide one in
Appendix~\ref{sec:proof-that-subgraph-is-girg}.

\begin{restatable}[folklore]{lemma}{subgraInCellIsGIRG}
  Let $H \sim \mathcal{G}(n, \mathbb B^d, \tau, \lambda, T)$ be a GIRG
  and let $G$ be the subgraph induced by the vertices within a cell of
  side length $\Delta = (f(n) / n)^{1 / d}$.  Then
  $G \sim \mathcal{G}(f(n), \mathbb B^d, \tau, \lambda, T)$.
\end{restatable}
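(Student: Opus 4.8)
The plan is to exhibit an explicit measure-preserving bijection between the configuration underlying $G$ and that underlying a fresh sample of $\mathcal G(f(n),\mathbb B^d,\tau,\lambda,T)$, namely the affine rescaling that blows the cell up to the unit cube. Write the cell as $C = x_0 + [0,\Delta]^d$ (assuming $f(n) \le n$, so that $\Delta \le 1$ and $C$ fits inside $\mathbb B^d$) and let $\phi \colon C \to \mathbb B^d$ be given by $\phi(x) = (x - x_0)/\Delta$. The one identity I would lean on throughout is $\dist(\phi(x), \phi(y)) = \dist(x, y)/\Delta$ for $x, y \in C$: the $L_\infty$ metric scales linearly under a uniform dilation, and since $C$ is an axis-aligned sub-cube of $\mathbb B^d$ there are no boundary effects, so restricting the metric to $C$ and then rescaling gives exactly the $L_\infty$ metric on $\mathbb B^d$. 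I would then push the point set, weights, and edges of $H \cap C$ forward through $\phi$ and check that each of the three construction steps of a GIRG on $f(n)$ vertices is reproduced.

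For the vertices, I would invoke the restriction property of Poisson point processes: the points of $H$ lying in $C$ form a homogeneous PPP on $C$ of intensity $n$, and applying the mapping theorem with $\phi$ (whose Jacobian determinant is $\Delta^{-d}$) turns this into a homogeneous PPP on $\mathbb B^d$ of intensity $n \Delta^d = f(n)$. Hence $|V(G)|$ is Poisson with mean $f(n)$ and, conditioned on its value, the rescaled positions are i.i.d. uniform on $\mathbb B^d$, exactly as in the first construction step with size parameter $f(n)$. The weights are i.i.d. $\mathrm{Pareto}(\tau)$ marks attached to the points, so by the marking theorem they remain i.i.d. $\mathrm{Pareto}(\tau)$ after restriction to the points in $C$; this is the second step, and it is untouched because $\tau$ does not change.

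For the edges, the point is that the $n$ appearing literally in the GIRG connection rule is precisely cancelled by the $\Delta^d$ coming from the rescaled distances. Conditioned on positions and weights, the edges of $H$ are independent, so the same holds for the induced subgraph $G$, and it suffices to match the per-pair probabilities. For $u, v \in C$ with images $u' = \phi(u)$, $v' = \phi(v)$ and weights $w_u, w_v$, using $\dist(u, v)^d = \Delta^d \dist(u', v')^d = (f(n)/n)\,\dist(u', v')^d$, the temperate probability becomes
\begin{align*}
  \min\left\{1, \left(\frac{\lambda w_u w_v}{n\,\dist(u, v)^d}\right)^{1/T}\right\}
  &= \min\left\{1, \left(\frac{\lambda w_u w_v}{f(n)\,\dist(u', v')^d}\right)^{1/T}\right\},
\end{align*}
which is the temperate rule for $f(n)$ vertices; and in the threshold case $T = 0$, dividing the condition $\dist(u, v) \le (\lambda w_u w_v / n)^{1/d}$ by $\Delta$ yields $\dist(u', v') \le (\lambda w_u w_v / f(n))^{1/d}$, the threshold rule for $f(n)$ vertices. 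Thus $\phi$ carries the full configuration of $G$ to one distributed as $\mathcal G(f(n), \mathbb B^d, \tau, \lambda, T)$, and since $\phi$ merely relabels positions without affecting adjacencies, $G$ itself has that distribution.

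I do not expect a real obstacle here: the statement is folklore and the work is entirely bookkeeping. The two places I would be careful are (i) tracking how the Poisson intensity transforms under $\phi$ — intensity $n$ on a region of volume $f(n)/n$ must become intensity $f(n)$ on $\mathbb B^d$, which is exactly where the specific choice $\Delta = (f(n)/n)^{1/d}$ enters — and (ii) verifying that only the size parameter moves from $n$ to $f(n)$ while $\tau$, $\lambda$, and $T$ are genuinely preserved, which hinges on the exact cancellation in the edge rule displayed above.
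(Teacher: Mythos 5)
Your proposal is correct and follows essentially the same route as the paper's proof: both rescale the cell to the unit cube, observe that the restricted Poisson process becomes a homogeneous process of intensity $f(n)$, note that the Pareto weights are unaffected marks, and verify that the factor $\Delta^d = f(n)/n$ in the rescaled distances exactly cancels the change from $n$ to $f(n)$ in the connection rule (in both the temperate and threshold cases). The only cosmetic difference is that you phrase the argument via the mapping/marking theorems for point processes where the paper phrases it as an explicit coupling of the two generation procedures.
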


Together with Theorem~\ref{thm:giant-likely-to-exist} this yields the
following corollary.  We note that this also yields large connected
components within cells that are too small to contain a core vertex.
For such cells, we know that we get a large connected component but we
do not know whether it connects to the giant of the whole graph.
Clearly, the same statement holds with the torus $\mathbb T^d$ as
ground space.

\begin{corollary}
\label{corollary::partial_giant}
Let $H \sim \mathcal{G}(n, \mathbb{B}^d, \tau, \lambda)$ be a
threshold GIRG and let $G$ be the subgraph induced by the vertices
within a cell of side length $\Delta = (f(n) / n)^{1/d}$.  Then $G$
has a connected component of size $\Theta(f(n))$ with probability
$1 - \exp\left( - \Omega(f(n)^{(3 - \tau) / 2}) \right)$.
\end{corollary}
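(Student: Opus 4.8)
The plan is to derive Corollary~\ref{corollary::partial_giant} directly from Theorem~\ref{thm:giant-likely-to-exist} by means of the folklore lemma stated just above it. A cell of side length $\Delta = (f(n)/n)^{1/d}$ has volume $\Delta^d = f(n)/n$, so it contains $\Delta^d n = f(n)$ vertices of $H$ in expectation. By the folklore lemma (a cell-induced subgraph of a GIRG is, after rescaling the cell to the unit cube, again a GIRG), the subgraph $G$ that $H$ induces on such a cell is distributed as a threshold GIRG $\mathcal{G}(f(n), \mathbb{B}^d, \tau, \lambda)$: the weight distribution, the ground space, and the threshold edge rule are exactly those of the original model, with the single replacement of the size parameter $n$ by $f(n)$.

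Given this, I would simply invoke Theorem~\ref{thm:giant-likely-to-exist} for $G \sim \mathcal{G}(m, \mathbb{B}^d, \tau, \lambda)$ with $m := f(n)$. This yields that $G$ has a connected component of size $\Theta(m) = \Theta(f(n))$ with probability $1 - \exp(-\Omega(m^{(3-\tau)/2})) = 1 - \exp(-\Omega(f(n)^{(3-\tau)/2}))$, which is precisely the claimed bound. The torus variant follows verbatim, since distances on $\mathbb{T}^d$ only decrease, exactly as in the corollary to Theorem~\ref{thm:giant-likely-to-exist}.

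The only point requiring care — and the closest thing to an obstacle — is checking that the substitution $n \mapsto f(n)$ is legitimate: Theorem~\ref{thm:giant-likely-to-exist} and the lemmas feeding into it (Lemmas~\ref{lem:layer-path}, \ref{lem:connected-core}, \ref{lem:coinflip-for-each-cell}) are written with the graph-size parameter denoted literally by $n$, and one must verify that every hidden constant there is uniform in the graph size. This holds: all constants originate from $d, \tau, \lambda$ and from the convergent series in Lemma~\ref{lem:layer-path}, none of which depends on the graph size; moreover, when applied with graph size $m = f(n)$, the side conditions of Lemma~\ref{lem:coinflip-for-each-cell} (with the choice $\hat w = \sqrt{m/\lambda}$, $\mu = \hat w^{\tau-1}$) are satisfied as soon as $m \to \infty$, since $\hat w$ and $\mu$ are then polynomial in $m$ and dominate the relevant $\polylog(m)$ thresholds. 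Finally I would remark that the statement is degenerate (and trivially true) unless $f(n) \to \infty$: if $f(n) = O(1)$ then $\exp(-\Omega(f(n)^{(3-\tau)/2})) = \Omega(1)$ and a single vertex already forms a component of size $\Theta(f(n))$, so the reduction above needs to be run only in the regime $f(n) \to \infty$, where it applies without modification.
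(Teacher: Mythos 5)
Your proposal is correct and matches the paper's intended argument exactly: the paper derives this corollary by combining the folklore lemma (the cell-induced subgraph is distributed as $\mathcal{G}(f(n), \mathbb{B}^d, \tau, \lambda)$) with Theorem~\ref{thm:giant-likely-to-exist} applied with $f(n)$ in place of $n$. Your additional checks on the uniformity of constants and the degenerate regime $f(n) = O(1)$ are sensible but not part of the paper's (implicit) proof.
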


\section{Conclusion}

Our proof for the emergence of a giant component in geometric
inhomogeneous random graphs builds on three simple arguments.  First,
GIRGs are likely to contain a clique of high-weight vertices.  Second,
the remaining vertices are sufficiently likely to connect to this core
via layer-paths, whose vertices have exponentially increasing weight.
And, third, most of these paths exist sufficiently independently from
each other.

We note that the same argumentation also works for the closely related
hyperbolic random graph model, where the discretization into weight
layers translates to a natural discretization of the underlying
geometric space that was previously used to bound the diameter of
HRGs~\cite{fk-dhrg-18}.

Our resulting strong probability bound can be combined with a simple
coupling argument to identify connected subgraphs of arbitrary size in
certain subregions of the geometric ground space.  In particular, when
these subregions are the cells of a regular grid (as used several
times throughout the paper), we obtain connected subgraphs of roughly
equal size.  We believe that this property can be utilized in the
context of problems with connectivity constraints.  For example, in
the previously mentioned \emph{balanced connected partitioning}
problem~\cite{cc-aambcgp-21, s-ogcp-22}, the goal is to partition the
vertices of a graph into a given number of sets of approximately equal
size, such that their induced subgraphs are connected.  Moreover, in
\emph{component order connectivity}~\cite{ghi-scocm-13} the aim is to
find a minimum number of vertices such that after their removal each
connected component has bounded size.  We conjecture that our
structural insights in Corollary~\ref{corollary::partial_giant} may
prove useful in obtaining efficient algorithms for these problems on
GIRGs and the networks they represent well.

\bibliography{main}

\begin{thebibliography}{10}

\bibitem{acl-rgmplg-01}
William Aiello, Fan Chung, and Linyuan Lu.
\newblock A random graph model for power law graphs.
\newblock {\em Experimental Mathematics}, 10(1):53--66, 2001.
\newblock \href {https://doi.org/10.1080/10586458.2001.10504428}
  {\path{doi:10.1080/10586458.2001.10504428}}.

\bibitem{ar-cgup-02}
Martin~J.B. Appel and Ralph~P. Russo.
\newblock The connectivity of a graph on uniform points on $[0,1]^d$.
\newblock {\em Statistics \& Probability Letters}, 60(4):351--357, 2002.
\newblock \href {https://doi.org/10.1016/S0167-7152(02)00233-X}
  {\path{doi:10.1016/S0167-7152(02)00233-X}}.

\bibitem{bf-evacaga-22}
Thomas Bl\"{a}sius and Philipp Fischbeck.
\newblock {On the External Validity of Average-Case Analyses of Graph
  Algorithms}.
\newblock In {\em European Symposium on Algorithms (ESA 2022)}, pages
  21:1--21:14, 2022.
\newblock \href {https://doi.org/10.4230/LIPIcs.ESA.2022.21}
  {\path{doi:10.4230/LIPIcs.ESA.2022.21}}.

\bibitem{bfw-ecmfsfn-21}
Thomas Bl\"{a}sius, Tobias Friedrich, and Christopher Weyand.
\newblock {Efficiently Computing Maximum Flows in Scale-Free Networks}.
\newblock In {\em 29th Annual European Symposium on Algorithms (ESA 2021)},
  volume 204, pages 21:1--21:14, 2021.
\newblock \href {https://doi.org/10.4230/LIPIcs.ESA.2021.21}
  {\path{doi:10.4230/LIPIcs.ESA.2021.21}}.

\bibitem{bfk-eggihrg-22}
Thomas Bläsius, Tobias Friedrich, Maximilian Katzmann, Ulrich Meyer, Manuel
  Penschuck, and Christopher Weyand.
\newblock Efficiently generating geometric inhomogeneous and hyperbolic random
  graphs.
\newblock {\em Network Science}, 10(4):361–380, 2022.
\newblock \href {https://doi.org/10.1017/nws.2022.32}
  {\path{doi:10.1017/nws.2022.32}}.

\bibitem{bfm-giant-15}
Michel Bode, N.~Fountoulakis, and Tobias Müller.
\newblock On the largest component of a hyperbolic model of complex networks.
\newblock {\em Electronic Journal of Combinatorics}, 22:1--46, 08 2015.
\newblock \href {https://doi.org/10.1214/17-AAP1314}
  {\path{doi:10.1214/17-AAP1314}}.

\bibitem{bkl-g-22}
Karl Bringmann, Ralph Keusch, Johannes Lengler, Yannic Maus, and Anisur~R.
  Molla.
\newblock Greedy routing and the algorithmic small-world phenomenon.
\newblock {\em Journal of Computer and System Sciences}, 125:59--105, 2022.
\newblock \href {https://doi.org/10.1016/j.jcss.2021.11.003}
  {\path{doi:10.1016/j.jcss.2021.11.003}}.

\bibitem{cc-aambcgp-21}
Yong Chen and Lin Guohui Xu Yao Zhang~An Chen, Zhi-Zhong.
\newblock {Approximation Algorithms for Maximally Balanced Connected Graph
  Partition}.
\newblock {\em Algorithmica}, 83:3715--3740, 2021.
\newblock \href {https://doi.org/10.1007/s00453-021-00870-3}
  {\path{doi:10.1007/s00453-021-00870-3}}.

\bibitem{cl-adrgged-02}
Fan Chung and Linyuan Lu.
\newblock The average distances in random graphs with given expected degrees.
\newblock {\em Proceedings of the National Academy of Sciences},
  99(25):15879--15882, 2002.
\newblock \href {https://doi.org/10.1073/pnas.252631999}
  {\path{doi:10.1073/pnas.252631999}}.

\bibitem{cl-ccrgg-02}
Fan Chung and Linyuan Lu.
\newblock {Connected Components in Random Graphs with Given Expected Degree
  Sequences}.
\newblock {\em Annals of Combinatorics}, 6(2):125--145, 2002.
\newblock \href {https://doi.org/10.1007/PL00012580}
  {\path{doi:10.1007/PL00012580}}.

\bibitem{dmp-cdrgg-08}
Josep D\'{\i}az, Dieter Mitsche, and Xavier P\'{e}rez-Gim\'{e}nez.
\newblock {On the Connectivity of Dynamic Random Geometric Graphs}.
\newblock In {\em Nineteenth Annual ACM-SIAM Symposium on Discrete Algorithms},
  page 601–610, 2008.
\newblock URL: \url{https://dl.acm.org/doi/abs/10.5555/1347082.1347149}.

\bibitem{er-erg-60}
Paul Erd{\H{o}}s, Alfr{\'e}d R{\'e}nyi, et~al.
\newblock On the evolution of random graphs.
\newblock {\em Publ. Math. Inst. Hung. Acad. Sci}, 5(1):17--60, 1960.

\bibitem{er-rgi-59}
Paul Erdős and Alfréd Rényi.
\newblock {On Random Graphs I}.
\newblock {\em Publicationes Mathematicae Debrecen}, 6:290--297, 1959.

\bibitem{fm-giant-18}
Nikolaos Fountoulakis and Tobias Müller.
\newblock Law of large numbers for the largest component in a hyperbolic model
  of complex networks.
\newblock {\em The Annals of Applied Probability}, 28(1):607--650, 2018.
\newblock URL: \url{https://www.jstor.org/stable/26542317}.

\bibitem{fk-dhrg-18}
Tobias Friedrich and Anton Krohmer.
\newblock {On the Diameter of Hyperbolic Random Graphs}.
\newblock {\em SIAM Journal on Discrete Mathematics}, 32(2):1314--1334, 2018.
\newblock \href {https://doi.org/10.1137/17M1123961}
  {\path{doi:10.1137/17M1123961}}.

\bibitem{ghi-scocm-13}
Daniel Gross, Monika Heinig, L.~Iswara, L.W. Kazmierczak, Kristi Luttrell, John
  Saccoman, and C.~Suffel.
\newblock {A Survey of Component Order Connectivity Models of Graph Theoretic
  Networks}.
\newblock {\em WSEAS Transactions on Mathematics}, 12:895--910, 2013.

\bibitem{gpp-rhg-12}
Luca Gugelmann, Konstantinos Panagiotou, and Ueli Peter.
\newblock {Random Hyperbolic Graphs: Degree Sequence and Clustering}.
\newblock In {\em 39th International Colloquium on Automata, Languages, and
  Programming (ICALP)}, pages 573--585, 2012.
\newblock \href {https://doi.org/10.1007/978-3-642-31585-5_51}
  {\path{doi:10.1007/978-3-642-31585-5_51}}.

\bibitem{gk-c-98}
P.~Gupta and P.R. Kumar.
\newblock Critical power for asymptotic connectivity.
\newblock In {\em 37th IEEE Conference on Decision and Control}, volume~1,
  pages 1106--1110, 1998.
\newblock \href {https://doi.org/10.1109/CDC.1998.760846}
  {\path{doi:10.1109/CDC.1998.760846}}.

\bibitem{k-girggcg-18}
Ralph Keusch.
\newblock {\em {Geometric Inhomogeneous Random Graphs and Graph Coloring
  Games}}.
\newblock PhD thesis, ETH Zurich, 2018.
\newblock \href {https://doi.org/10.3929/ethz-b-000269658}
  {\path{doi:10.3929/ethz-b-000269658}}.

\bibitem{km-slcrhg-19}
Marcos Kiwi and Dieter Mitsche.
\newblock On the second largest component of random hyperbolic graphs.
\newblock {\em SIAM Journal on Discrete Mathematics}, 33(4):2200--2217, 2019.
\newblock \href {https://doi.org/10.1137/18M121201X}
  {\path{doi:10.1137/18M121201X}}.

\bibitem{kpk-h-10}
Dmitri Krioukov, Fragkiskos Papadopoulos, Maksim Kitsak, Amin Vahdat, and
  Mari\'an Bogu\~n\'a.
\newblock Hyperbolic geometry of complex networks.
\newblock {\em Physical Review E}, 82:036106, 2010.
\newblock \href {https://doi.org/10.1103/PhysRevE.82.036106}
  {\path{doi:10.1103/PhysRevE.82.036106}}.

\bibitem{mu-pc-05}
Michael Mitzenmacher and Eli Upfal.
\newblock {\em {Probability and Computing: Randomized Algorithms and
  Probabilistic Analysis}}.
\newblock Cambridge University Press, 2005.
\newblock \href {https://doi.org/10.1017/CBO9780511813603}
  {\path{doi:10.1017/CBO9780511813603}}.

\bibitem{ms-k-19}
Tobias Müller and Merlijn Staps.
\newblock {The Diameter of KPKVB Random Graphs}.
\newblock {\em Advances in Applied Probability}, 51(2):358–377, 2019.
\newblock \href {https://doi.org/10.1017/apr.2019.23}
  {\path{doi:10.1017/apr.2019.23}}.

\bibitem{p-rgg-03}
Mathew Penrose.
\newblock {\em Random Geometric Graphs}.
\newblock Oxford University Press, 2003.

\bibitem{s-ogcp-22}
Stephan Schwartz.
\newblock An overview of graph covering and partitioning.
\newblock {\em Discrete Mathematics}, 345(8):112884, 2022.
\newblock \href {https://doi.org/10.1016/j.disc.2022.112884}
  {\path{doi:10.1016/j.disc.2022.112884}}.

\bibitem{s-g-02}
Bo~S\"oderberg.
\newblock General formalism for inhomogeneous random graphs.
\newblock {\em Phys. Rev. E}, 66:066121, 2002.
\newblock \href {https://doi.org/10.1103/PhysRevE.66.066121}
  {\path{doi:10.1103/PhysRevE.66.066121}}.

\end{thebibliography}

\appendix

\section{Proof That the Subgraph of a Cell is a GIRG}
\label{sec:proof-that-subgraph-is-girg}

\subgraInCellIsGIRG*
\begin{proof}
  Note that we basically consider two ways to generate a graph and
  claim that they give the same probability distribution over graphs.
  Intuitively, this can be seen by generating points with weights in
  the cell $[0, \Delta]^d$, scaling it to the full ground space
  $[0, 1]^d$, and making three observations.  First, for the vertex
  positions, this is equivalent to directly sampling points in
  $[0, 1]^d$.  Secondly, the weight distribution is independent of the
  number of vertices.  Thirdly, the connection probabilities between
  vertices are the same in the scaled variant as they are in the cell.
  To make this more formal, draw $G$ as a subgraph of $H$ as stated in
  the lemma and draw
  $G' \sim \mathcal{G}(f(n), \mathbb B^d, \tau, \lambda, T)$.  We show
  that $G$ and $G'$ follow the same distribution.
	
  Recall that we consider the Poisson variant of the GIRG model, i.e.,
  the vertices are the result of a Poisson point process in the
  product space $\mathbb B^d \times \mathcal W$.  Thus, the vertex set
  of $G$ can be generated by first determining the number of points
  $n_G$ with positions in $[0, \Delta]^d$, which is a random variable
  following a Poisson distribution with expectation
  $n \cdot \Delta^d = f(n)$.  Then, independently for each of the
  $n_G$ vertices, a position is drawn uniformly at random
  from~$[0, \Delta]^d$ and a weight is drawn from $(1, \infty)$ with
  probability density function $(\tau - 1)\cdot w^{-\tau}$.
	
  To generate $G'$, we can also first determine the number of points
  $n_{G'}$, which is also Poisson distributed with expectation $f(n)$.
  Thus, we can couple $n_G$ and $n_{G'}$ to have the same value and we
  assume a one-to-one correspondence between the vertices in $G$ and
  $G'$ in the following.  For each vertex, the weight is again a
  random variable with density $(\tau - 1)\cdot w^{-\tau}$, which only
  depends on $\tau$.  Thus, for each vertex, we can couple its weight
  in $G$ with its weight in $G'$ to assume them to be equal.  The
  position in $G'$ is drawn uniformly from $[0, 1]^d$.  Thus, we can
  couple the random variables for the positions in $G'$ with those in
  $G$ such that a vertex with position $x \in [0, \Delta]^d$ in $G$
  has position $x / \Delta$ in $G'$.  Note that this has the effect
  that all distances between vertices in $G'$ are scaled by a factor
  of $1 / \Delta$ compared to the corresponding distance in $G$.
	
  It remains to show that for every vertex pair $u, v$ the connection
  probability in $G$ is the same as in $G'$.  Let $w_u$ and $w_v$ be
  the weight of $u$ and $v$ (which is the same for $G$ and~$G'$ due to
  the coupling).  Also, let $\dist(u, v)$ be the distance between $u$
  and $v$ in $G$ and let $\dist'(u, v) = \dist(u, v) / \Delta$ be
  their distance in $G'$.  Then (for $T > 0$) the connection
  probability of $u$ and $v$ in $G$ is
  \begin{equation*}
    \Prob{\{u, v\} \in E} = \min\left\{\left( \frac{\lambda w_u w_v}{n \dist(u, v)^d} \right)^{1 / T}, 1\right\}.
  \end{equation*}
  The two things that change for $G'$ is that $n$ is replaced by
  $f(n)$ and $\dist(u, v)^d$ is replaced by
  $\dist'(u, v)^d = (\dist(u, v) / \Delta)^d = n / f(n) \cdot
  \dist(u, v)^d$.  The $f(n)$ cancels out, yielding the same connection
  probability for $G$ and $G'$.  For $T = 0$, the argument works
  analogously. 
\end{proof}

\end{document}